\newtheorem{mytheorem}{Theorem}
\newtheorem{mycor}{Corollary}
\newcommand{\M}{\mathcal{M}}
\newcommand{\sumkl}{\sum_{k,l = 0}^{d-1}}
\newcommand{\sumnm}{\sum_{n,m = 0}^{d-1}}
\newcommand{\Pkl}{P_{k,l}}
\newcommand{\Okl}{\Omega_{k,l}}
\newcommand{\ckl}{c_{k,l}}
\newcommand{\Tij}{T_{i,j}}
\newcommand{\WWij}{W_{i,j} \otimes W_{i,j}^\ast}
\title{Special features of the Weyl-Heisenberg Bell basis imply unusual entanglement structure of Bell-diagonal states}
\author[1,*]{Christopher Popp}
\author[1,**]{Beatrix C. Hiesmayr}
\affil[1]{University of Vienna, Faculty of Physics, Währingerstrasse 17, 1090 Vienna, Austria}
\affil[*]{christopher.popp@univie.ac.at}
\affil[**]{Beatrix.Hiesmayr@univie.ac.at}
\date{September 2023}
\begin{document}

\maketitle
\begin{abstract}
Bell states are of crucial importance for entanglement based methods in quantum information science. Typically, a standard construction of a complete orthonormal Bell-basis by Weyl-Heisenberg operators is considered. We show that the group structure of these operators has strong implication on error correction schemes and on the entanglement structure within Bell-diagonal states. In particular, it implies an equivalence between a Pauli channel and a twirl channel. Interestingly, other complete orthonormal Bell-bases do break the equivalence and lead to a completely different entanglement structure, for instance in the share of PPT-entangled states. In detail, we find that the standard Bell basis has the highest observed share on PPT-states and PPT-entangled states compared to other Bell bases. In summary, our findings show that the standard Bell basis construction exploits a very special structure with strong implications to quantum information theoretic protocols if a deviation is considered. 

\end{abstract}
\section{Introduction}
Leveraging quantum phenomena in technology offers new resources that enable methods with better performance than classically limited methods in the field of information theory and related applications like communication, computation, simulation, metrology or cryptography~~\cite{distQuantComp,quantSimReview, QuantCryptBell,oneStepQSDC, twoStepQuantDirCom}. 

Entanglement is one of the main resources for novel methods for quantum information processing like super-dense coding~\cite{superdenseCodingBennet} or teleportation~\cite{teleportingWeyl}, as well as in other fields like e.g. medical sciences~\cite{jpet1, HiesmayrMoskalGenuine, HiesmayrMoskalWitness, jpet2,jpet3}. Despite its relevance for our quantum-theoretic understanding of nature~\cite{epr, bellEpr, loopholeFreeBell}  and for practical applications in quantum technology, this phenomenon is far from being well understood. 
Two entangled qubits are the simplest system to observe entanglement. The qubit Bell states~\cite{bellStates} are a set of maximally entangled states, which form a basis of this bipartite Hilbert space. A bipartite pure state is called ``maximally entangled'', if the reduced state for each subsystem is the maximally mixed state, so all information about the state lies in the correlations between the subsystems. Most applications leveraging entanglement as a resource use these special states. The amount of entanglement two parties possess if they share a maximally entangled qubit Bell state is named ``ebit''~\cite{wilde_2017}. Recently, multi-level quantum systems called ``qudits'' (with the ``$d$'' indicating the general dimension of the system) have been shown to offer potential advantages for applications~\cite{highDimQCReview, qditsQCWang}. The notion of Bell states can be generalized to the qudit systems~\cite{Sych}, in which case two parties holding a  maximally entangled bipartite qudit state are said to possess one ``edit''.

A frequently used standard construction for a basis of bipartite Bell states of arbitrary dimension is based on a set of operators called ``Weyl-Heisenberg'' operators, which can be seen as generalization of the two-dimensional Pauli matrices to other dimensions~\cite{teleportingWeyl}. Applying these operators locally to the standard maximally entangled state, a basis of Bell states can be generated~\cite{baumgartner1}. Due to their properties, this specific ``standard'' Bell basis is often used in applications, but as we will show in this contribution, other Bell bases exist, which differ in some relevant properties. 

For real applications, decoherence due to interaction with the environment generally disturbs a pure state and the resource in form of an edit is destroyed or cannot be used effectively. For this reason, quantum error correction~\cite{bennetMixedStateEntPurErrCorr, cssErrCorr} and entanglement purification/distillation~\cite{bennetEntPuri} are required. While error correction transforms a certain disturbed state back to some logical pure state, entanglement purification processes several weakly entangled or disturbed states to produce fewer but stronger or maximally entangled states. Both concepts have been generalized to qudit systems. The Weyl-Heisenberg operators thereby play a crucial role in the form of defining a ``nice'' error basis~\cite{knillQuditStabilizer, nonbinaryErrBases, nadkarniQuditStabilizer} or to use the corresponding Bell states as target states~\cite{horoRedCritAndDist, GernotEffEntPuriQudits, vollbrechtEffDistBeyondQubits}.
One phenomenon, which does not appear for bipartite qubits but for dimension of the subsystems $d\geq3$ is bound entanglement~\cite{distillation,Halder_BE, lockhart_e, Bruss_BE, Slater2019JaggedIO,hiesmayrLoeffler}. Bound entangled states are entangled states, but they cannot be used for entanglement purification. It is known that all entangled states with positive partial transposition (PPT) are bound entangled, while the existence of bound entangled states with negative partial transposition (NPT) is still an open problem. In general, the separability problem to decide whether a given PPT quantum state is entangled or separable is an NP-hard problem~\cite{nphard, nphard-strong}. 

Bipartite quantum states that are diagonal in the standard Bell basis are called ``Bell-diagonal''. Bell-diagonal states arise naturally in noise environments affecting Bell states or in several applications like entanglement distillation or quantum key distribution \cite{QuantCryptBell, qutritQKD}. If they are constructed via the Weyl-Heisenberg operators, special algebraic and geometric properties of the states can be leveraged for some insight concerning this complex structure of entanglement~\cite{baumgartner1, hiesmayrLoeffler, baumgartnerHiesmayr}. Recently, the authors combined analytical and numerical methods to investigate the systems of bipartite qutrits ($d=3$) and ququarts ($d=4$) in detail~\cite{PoppACS, PoppQutritsAndQuquarts, PoppJoss2023, hiesmayr1}. As a result, $95\%$/$77\%$ of all PPT (standard) Bell-diagonal qutrits/ququarts can be classified as PPT-entangled or separable. Moreover, it was shown that the group structure in the set of the standard Weyl-Heisenberg-constructed Bell states is highly relevant for the entanglement structure.

In this work, we demonstrate properties of the standard Bell basis constructed via the Weyl-Heisenberg operators and show that this basis has special properties among a set of generalized Bell bases with significant implications for the corresponding systems of Bell-diagonal states.
The paper is organized as follows:
First, we define a set of operators called ``Weyl-Twirl'' operators, for which all elements of the standard Bell basis are eigenstates and demonstrate an equivalence between the map to the set of Bell-diagonal states and the randomized application of those operators. Then, we present some applications of this ``Weyl-Twirl'' with relevance for the separability problem of Bell-diagonal states and a simple error correction scheme for Bell states. 
Second, by generalizing the construction based on the Weyl-Heisenberg operators, we present a family of Bell bases, which contains the standard Bell basis, but are generally not unitarily equivalent. We show that, even though the generalized Bell bases also consist of maximally entangled orthonormal Bell states, several characteristics and properties of the standard Bell basis relevant for applications do not exist in the generalized case. These differences significantly affect the entanglement structure of Bell-diagonal states and the detection of PPT-entangled states. Finally, we conclude with a summary of our findings.

\section{A Stabilizing Group for Maximally Entangled States}

Here we define the Weyl-Heisenberg operators, introduce the channels and show their equivalency. Then we discuss applications in quantum information theory.

\subsection{Weyl-Heisenberg and Weyl-Twirl Operators}
Let $\mathcal{H} = \mathcal{H}_d \otimes \mathcal{H}_d$ be the Hilbert space of two qudits of dimension $d$. In this bipartite system, we consider maximally entangled states, also called ``Bell states''. An orthonormal basis of $d^2$ Bell states spanning the Hilbert space can be defined by applying certain local operators on one of the subsystems to a seed state, e.g. the maximally entangled state $\ket{\Omega_{00}} := \frac{1}{\sqrt{d}} \sum_{i = 0}^{d-1} \ket{ii}$. Applying the Weyl-Heisenberg operators~\cite{teleportingWeyl}
\begin{gather}
    \label{weylOps}
    W_{k,l} := \sum_{j=0}^{d-1}w^{j k} \ket{j} \bra{j+l},~~k,l = 0,...,d-1
\end{gather}
with $w := e^{\frac{2 \pi i}{d}}$ to the (w.l.o.g.) first subsystem of $\ket{\Omega_{00}}$ defines the ``standard'' Bell basis of bipartite qudits:
\begin{gather}
  \label{bellStates}
  \ket{\Okl} := W_{k,l} \otimes \mathbb{1}_d \ket{\Omega_{00}},~~k,l = 0,...,d-1
\end{gather}
In eq.(\ref{weylOps}) and in the following, addition and subtraction are always to be understood$\mod d$. The indices $k$ and $l$ in $W_{k,l}$ relate to phase and shift operations, respectively. To highlight this property, $W_{k,l}$ can also be written as $W_{k,l} = Z(k)X(l)$, where $Z(k) = \sum_j w^{j \cdot k} \ketbra{j}{j}$ is the phase operator and $X(l) = \sum_j\ketbra{j-l}{j}$ denotes the shift operator. \\
The Weyl-Heisenberg operators obey the following algebraic relations:
\begin{equation}
    \label{weylRelations}
    \begin{aligned}
          & W_{k_1,l_1}W_{k_2,l_2} = w^{l_1 k_2}~W_{k_1+k_2, l_1+l_2}  \\
          & W_{k,l}^\dagger = w^{k l}~W_{-k, -l} = W_{k,l}^{-1} \\
          & W_{k,l}^\ast = W_{-k,l} \\
          & W_{k,l}^T = w^{-k l } W_{k, -l}.         
    \end{aligned}
\end{equation}
Here, $(\dagger), (\ast)$ and $(T)$ denote the adjoint, complex conjugation and transposition with respect to the computational basis, respectively. These relations imply a linear structure for the Bell states, which was recently shown~\cite{PoppQutritsAndQuquarts} to be highly relevant for the geometric properties of the set of separable, PPT entangled and NPT entangled states that are diagonal in the basis defined in eq.(\ref{bellStates}). \\
Consider now the set of $d^2$ unitary operators, which we call ``Weyl-Twirl operators'',
\begin{gather}
    \label{Tops}
    T_{i,j} := W_{i,j}\otimes W_{i,j}^\ast ,~~i,j = 0,...,d-1\;.
\end{gather}
Applying the Weyl relations (\ref{weylRelations}), one observes that this set of unitary operators $\lbrace \Tij|i,j = 0,..., d-1 \rbrace$ forms an abelian group under multiplication with neutral element $T_{0,0} = \mathbb{1}_d \otimes \mathbb{1}_d$:
\begin{equation}
    \label{groupT}
    \begin{aligned}
        & T_{i_1,j_1}T_{i_2,j_2} = T_{i_1+i_2, j_1+j_2} = T_{i_2,j_2}T_{i_1,j_1}\\
        & T_{i,j}^{-1} = T_{-i,-j}\;.
    \end{aligned}
\end{equation}
Using the well--known property of the maximally entangled state $\mathbb{1}_d \otimes M \ket{\Omega_{0,0}} = M^T \otimes \mathbb{1}_d \ket{\Omega_{0,0}} $ for any matrix $M$ together with the Weyl relations (\ref{weylRelations}), we can calculate the action of $\Tij$ on the Bell state $\ket{\Okl}$:
\begin{equation}
    \label{EigvalsT}
    \begin{aligned}
        \Tij \ket{\Okl} &= (W_{i,j} \otimes W_{i,j}^\ast) (W_{k,l} \otimes \mathbb{1}_d) \ket{\Omega_{0,0}}\\ 
        &= W_{i,j}W_{k,l}\otimes W_{-i,j} \ket{\Omega_{00}} \\
        &= W_{i,j}W_{k,l}W_{-i,j}^T \otimes \mathbb{1}_d \ket{\Omega_{00}} \\
        &= w^{jk-il} W_{k,l} \otimes \mathbb{1}_d \ket{\Omega_{00}} \\
        &= w^{jk-il} \ket{\Okl}
    \end{aligned}
\end{equation}
Each Bell state $\ket{\Okl}$ is thus an eigenvector for each $T_{i,j}$, with the phase of the eigenvalue being determined by the corresponding indices $(k,l)$ and $(i,j)$. 

\subsection{Pauli Projectors and the Weyl-Twirl Channel}

The properties of quantum channels, i.e. completely positive maps from one quantum state to another, lie at the heart of the dynamics of state propagation (see e.g. \cite{wilde_2017} for an overview). For instance, it is of interest to characterize channels which transform entangled states to separable states, so called entanglement-breaking channels (for a recent study see Ref.~\cite{Rudnicki}). Here we consider bipartite ``Pauli-type'' channels, i.e. channels whose Kraus operators are of the form $P_{k,l}$, that transform general states to Bell-diagonal ones.

Let us define the Bell projectors by  $\Pkl := \ket{\Okl}\bra{\Okl}$. The set $\M_d$ of Bell-diagonal states is given as mixtures of the projectors $\lbrace \Pkl \rbrace$ with mixing probabilities $\lbrace \ckl \rbrace$:
\begin{gather}
    \label{magicSimplex}
    \M_d := \lbrace \rho = \sumkl c_{k,l}\, P_{k,l}~ |~
    \sumkl c_{k,l} = 1, c_{k,l} \geq 0  \rbrace
\end{gather}
This object forms a mathematical simplex and is also known as the magic simplex~\cite{baumgartnerHiesmayr,baumgartnerHiesmayr2,baumgartner1}, where ``magic'' refers to the “magic'' Bell basis for bipartite qubits introduced by Wootters and Hill~\cite{HillWootters}.
Let $\rho = \sumkl \sumnm\rho_{k,l,m,n} \ket{\Okl}\bra{\Omega_{m,n}} \in \mathcal{H}$ be a bipartite state, represented in the Bell basis. We define the Pauli channel $\mathcal{P}: \mathcal{H} \rightarrow \M_d$ as a map from the total Hilbert space $\mathcal{H}$ to the set of Bell-diagonal states as follows:
\begin{equation}
    \label{pauliCh}
    \begin{aligned}
        \mathcal{P}(\rho) &:= \sumkl \Pkl\; \rho\; \Pkl = \sumkl \bra{\Okl} \rho \ket{\Okl}\, \Pkl \\
        &= \sumkl \rho_{k,l,k,l}\, \Pkl := \sumkl \ckl\, \Pkl
    \end{aligned}
\end{equation}
Using the operators $\Tij$ (\ref{Tops}), we can define another channel, which we name ``Weyl-Twirl'' channel:
\begin{gather}
    \label{twirl}
    \mathcal{T}(\rho) := \frac{1}{d^2}\; \sum_{i,j=0}^{d-1} \Tij\; \rho\; \Tij^\dagger = 
    \frac{1}{d^2}\; \sum_{i,j=0}^{d-1} \WWij\; \rho\; (\WWij)^\dagger
\end{gather}
We will show now that these two channels are equivalent, i.e. 
\begin{mytheorem}
    \label{theorem1}
\begin{equation}
    \label{chEquiv}
    \mathcal{P}(\rho) \equiv \mathcal{T}(\rho)~~\forall \rho \in \mathcal{H}
\end{equation}
\end{mytheorem}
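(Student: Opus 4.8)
The plan is to diagonalize the action of the twirl $\mathcal{T}$ in the Bell basis by leaning entirely on the eigenvalue relation (\ref{EigvalsT}), which already establishes that every Bell state $\ket{\Okl}$ is a common eigenvector of all Weyl-Twirl operators $\Tij$ with eigenvalue $w^{jk-il}$. First I would write an arbitrary bipartite state in the Bell basis, $\rho = \sumkl \sumnm \rho_{k,l,m,n}\ket{\Okl}\bra{\Omega_{m,n}}$, and compute $\Tij\,\rho\,\Tij^\dagger$ termwise. Because $\Tij$ acts on the ket $\ket{\Okl}$ by the scalar $w^{jk-il}$ and on the bra $\bra{\Omega_{m,n}}$ by the conjugate scalar $w^{-(jm-in)}$, the dyad $\ket{\Okl}\bra{\Omega_{m,n}}$ is merely rescaled by the phase $w^{j(k-m)-i(l-n)}$ and no basis dyads are mixed. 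This is the structural reason the computation goes through cleanly.

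Next I would substitute this into the definition (\ref{twirl}) of $\mathcal{T}$, interchange the $(i,j)$ average with the $(k,l,m,n)$ expansion, and isolate the only $(i,j)$-dependent factor, namely $\frac{1}{d^2}\sum_{i,j=0}^{d-1} w^{j(k-m)-i(l-n)}$. This double sum factorizes into two independent geometric sums over $d$-th roots of unity, and the standard orthogonality relation $\frac{1}{d}\sum_{j=0}^{d-1} w^{j(k-m)} = \delta_{k,m}$ (with indices taken mod $d$), together with its partner in $i$, collapses the factor to $\delta_{k,m}\delta_{l,n}$. All off-diagonal dyads are thereby annihilated, leaving $\mathcal{T}(\rho)=\sumkl \rho_{k,l,k,l}\,\Pkl$, which is exactly $\mathcal{P}(\rho)=\sumkl \ckl\,\Pkl$ from (\ref{pauliCh}); this proves the claimed equivalence.

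The step I expect to carry the real content is the phase cancellation, and in particular the implicit fact that distinct Bell states are labelled by distinct characters: the map $(k,l)\mapsto \big(w^{jk-il}\big)_{i,j}$ is injective on $\{0,\dots,d-1\}^2$, since $w^{j(k-m)-i(l-n)}=1$ for all $i,j$ forces $k\equiv m$ and $l\equiv n \pmod d$. This injectivity is precisely what ensures the orthogonality relation eliminates every off-diagonal term rather than only some of them. Conceptually, it identifies $\mathcal{T}$ as the group-averaging projection associated with the abelian group $\{\Tij\}\cong \mathbb{Z}_d\times\mathbb{Z}_d$ onto its fixed-point algebra, and shows that this algebra is the maximal abelian algebra of operators diagonal in the Bell basis, i.e. exactly the range of $\mathcal{P}$. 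Everything beyond this point is routine bookkeeping of phases, relying only on the Weyl relations (\ref{weylRelations}) and the identity $\mathbb{1}_d\otimes M\ket{\Omega_{0,0}} = M^T\otimes\mathbb{1}_d\ket{\Omega_{0,0}}$ already invoked for (\ref{EigvalsT}).
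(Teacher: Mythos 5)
Your proposal is correct and follows essentially the same route as the paper's proof: both expand $\rho$ in the Bell basis, use the eigenvalue relation (\ref{EigvalsT}) to reduce the twirl on each dyad $\ket{\Okl}\bra{\Omega_{m,n}}$ to the phase factor $w^{j(k-m)-i(l-n)}$, and then apply the roots-of-unity identity $\sum_{j=0}^{d-1}w^{jx}=d\,\delta_{x,0}$ to kill all off-diagonal terms, leaving $\mathcal{T}(\rho)=\sumkl \rho_{k,l,k,l}\,\Pkl=\mathcal{P}(\rho)$. Your closing remarks on character injectivity and the fixed-point algebra are a nice conceptual gloss but add nothing beyond the paper's argument.
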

\begin{proof}
Consider the action of $\mathcal{T}$ on a basis state $\ket{\Okl}\bra{\Omega_{m,n}}$:
\begin{equation*}
    \begin{aligned}
        \mathcal{T}(\ket{\Okl}\bra{\Omega_{m,n}}) &= \frac{1}{d^2}\sum_{i,j}^{d-1}  w^{j(k-m)-i(l-n)}\;\ket{\Okl}\bra{\Omega_{m,n}} \\
        &= \delta_{k,m} \delta_{l,n}\; \ket{\Okl}\bra{\Omega_{m,n}}
    \end{aligned}
\end{equation*}
Here, the first equality follows from (\ref{EigvalsT}) and the second equality from the identity $\sum_{j=0}^{d-1}w^{jx} = d \delta_{x,0}$. Let $\rho = \sum_{k,l,m,n=0}^{d-1}\rho_{k,l,m,n}\; \ket{\Okl}\bra{\Omega_{m,n}} \in \mathcal{H}$. The equation above implies that only diagonal elements remain, i.e.:
\begin{equation*}
    \begin{aligned}
        \mathcal{T}(\rho) = \sumkl \rho_{k,l,k,l} \ket{\Okl} \bra{\Okl} = \sumkl \rho_{k,l,k,l}\; \Pkl
        = \mathcal{P}(\rho)
    \end{aligned}
\end{equation*}
\end{proof}
Calling the application of the channel $\mathcal{T}$  ``Weyl-Twirl'' is motivated by the fact that eq.(\ref{twirl}) represents the random application of (bi-local) operators $\WWij$. Operations of the form $\rho \rightarrow \int (U \otimes U^\ast) \rho (U \otimes U^\ast)^\dagger dU $ with $U$ being a local unitary operator and $dU$ being the according Haar measure are generally named ``Twirl''. These operations leave certain diagonal elements invariant, while eliminating off-diagonal elements. Operations of this kind transform a state to Bell-diagonal form and are often relevant for certain applications, e.g. entanglement purification schemes~\cite{bennetEntPuri}. The channel equivalence of the Pauli channel (eq.\ref{pauliCh}) and the finite ``Weyl-twirl'' (eq.\ref{twirl})) shows that the operators $\Tij$ have the special property that they leave all Bell-diagonal elements invariant under random application, while eliminating any off diagonal elements.

\subsection{ Applications of Weyl-Twirl operators and channel}
The properties of the Weyl-Twirl operators $\Tij$ and the channel equivalence (eq.\ref{chEquiv}) imply several properties that are relevant for applications related to maximally entangled states. In the following, we will briefly mention two examples.

\subsubsection{The separability problem}
A bipartite quantum state $\rho \in \mathcal{H}_1 \otimes \mathcal{H}_2$ is separable ($\rho \in SEP$), if it can be written as convex combination of pure product states, that is $\rho = \sum_i q_i\rho_1^i \otimes \rho_2^i$ with $\rho_{1/2}^i \in \mathcal{H}_{1/2}$, $q_i>0$ and $\sum_i q_i = 1$.
Due to the existence of PPT-entangled states, the decision problem, whether a given bipartite state $\rho$ is separable ($\rho \in SEP$) or entangled ($\rho \notin SEP$), is generally a NP-hard problem~\cite{nphard,nphard-strong} called ``separability problem''. While all bipartite states with negative partial transposition (NPT) are known to be entangled according to the Peres-Horodecki/PPT-criterion~\cite{peres}, for subsystems of dimension $d \geq 3$, also entangled states with non-negative partial transposition (PPT) exist~\cite{distillation}. No general and efficient criterion is known that detects all PPT-entangled states nor a general method to construct those. Consequently, given a bipartite PPT state, it is generally unknown whether the state is entangled or not. However, for the set of Bell-diagonal states $\M_d$ \eqref{magicSimplex} in dimensions $d=3$ and $d=4$ the problem has efficiently be solved. In detail, for
$d=3$ ($4$), $95\%$ ($77\%$) of all PPT states have been classified as entangled or separable, using a combination of analytical criteria and numerical methods~\cite{PoppACS, PoppQutritsAndQuquarts}. Thus, for those dimension and the standard Bell basis the structure of PPT-entangled states is known.

As demonstrated in referenced works, the combination of two numerical methods are especially effective to distinguish separable and PPT-entangled Bell-diagonal states:
\begin{enumerate}
    \item The construction of separable states $\rho_s \in \M_d \cap SEP$, close to the border of the convex set of separable states, providing an inner approximating polytope of $\M_d \cap SEP$
    \item The construction of optimal entanglement witnesses (see definition below), providing an outer approximation of $\M_d \cap SEP$
\end{enumerate}
Both methods rely on an efficient parameterization of separable Bell-diagonal states $\rho_s \in \M_d \cap SEP$ that can be used to optimize corresponding target quantities. Given an efficient parameterization of unitaries and pure states~\cite{spenglerHuberHiesmayr, spenglerHuberHiesmayr2}, a parameterization of separable and Bell-diagonal states follows from the following corollaries of Theorem \ref{theorem1} :
\begin{mycor}
    \label{cor1}
    Any separable state $\rho_s$ remains separable under application of the Pauli channel $\mathcal{P}$, that is:
    $$
    \rho_s \in SEP \Rightarrow \mathcal{P}(\rho_s) \in \M_d \cap SEP
    $$
\end{mycor}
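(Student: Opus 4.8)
The plan is to exploit the channel equivalence of Theorem~\ref{theorem1} and thereby reduce the claim to the elementary fact that local operations preserve separability. The conceptual hurdle is that the Pauli channel $\mathcal{P}$ is written in terms of the Bell projectors $\Pkl = \ket{\Okl}\bra{\Okl}$, which project onto \emph{maximally entangled} states and are therefore highly non-local; working with $\mathcal{P}$ in this form offers no direct control over separability. The resolution is to pass to the Weyl-Twirl representation, where the same channel is built from the \emph{bi-local} operators $\WWij$ and its action on product states becomes transparent.

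First I would invoke Theorem~\ref{theorem1} to rewrite the output as a twirl,
\[
\mathcal{P}(\rho_s) = \mathcal{T}(\rho_s) = \frac{1}{d^2}\sum_{i,j=0}^{d-1} \WWij\; \rho_s\; (\WWij)^\dagger .
\]
Next I would take an arbitrary separable decomposition $\rho_s = \sum_\alpha q_\alpha\, \sigma_1^\alpha \otimes \sigma_2^\alpha$ (with $q_\alpha > 0$, $\sum_\alpha q_\alpha = 1$ and $\sigma_{1/2}^\alpha$ density operators on the respective subsystems) and observe that each summand of the twirl factorizes,
\[
\WWij\, \big(\sigma_1^\alpha \otimes \sigma_2^\alpha\big)\, (\WWij)^\dagger = \big(W_{i,j}\,\sigma_1^\alpha\, W_{i,j}^\dagger\big) \otimes \big(W_{i,j}^\ast\, \sigma_2^\alpha\, (W_{i,j}^\ast)^\dagger\big),
\]
which is again a product state because conjugation by the unitary $W_{i,j}$ (respectively $W_{i,j}^\ast$) sends a density operator to a density operator on each factor.

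The conclusion would then follow by convexity: $\mathcal{P}(\rho_s)$ is a convex combination, with weights $q_\alpha / d^2$, of product states, and $SEP$ is a convex set, so $\mathcal{P}(\rho_s) \in SEP$. Membership in $\M_d$ is immediate from the definition of $\mathcal{P}$ in eq.~\eqref{pauliCh}, whose output is Bell-diagonal for every input; hence $\mathcal{P}(\rho_s) \in \M_d \cap SEP$. I do not anticipate a genuine technical obstacle — the only step that requires attention is checking that $W_{i,j}^\ast$ is itself unitary (indeed $W_{i,j}^\ast = W_{-i,j}$ by the relations~\eqref{weylRelations}), so that the second-factor conjugation is a legitimate local unitary channel. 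The real content of the corollary is thus entirely carried by Theorem~\ref{theorem1}, which converts the non-local projector form of $\mathcal{P}$ into the manifestly local twirl form.
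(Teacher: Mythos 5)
Your proposal is correct and follows essentially the same route as the paper: invoke Theorem~\ref{theorem1} to pass to the twirl form, note that conjugation by the bi-local unitaries $\WWij$ preserves product structure, and conclude by convexity. The only cosmetic difference is that the paper first reduces to a single pure product state by linearity, whereas you carry the full separable decomposition through the sum, which amounts to the same argument.
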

\begin{proof}
    Due to linearity, it suffices to show the corollary for a pure product state $\rho_s = \ket{\psi_1}\bra{\psi_1} \otimes \ket{\psi_2}\bra{\psi_2}$. By Theorem \ref{theorem1}, we have:
    \begin{equation*}
        \begin{aligned}
            \mathcal{P}(\rho_s) &= \mathcal{T}(\rho_s) \\
            &= \frac{1}{d^2}\sum_{i,j=0}^{d-1}  \WWij \;\rho_s\; (\WWij)^\dagger \\
            &= \frac{1}{d^2}\sum_{i,j=0}^{d-1}  W_{i,j} \ket{\psi_1}\bra{\psi_1} W_{i,j}^\dagger \otimes  W_{i,j}^\ast \ket{\psi_2}\bra{\psi_2} (W_{i,j}^\ast)^\dagger
        \end{aligned}
    \end{equation*}
    $W_{i,j}$ is unitary, so $W_{i,j} \ket{\psi_1}\bra{\psi_1} W_{i,j}^\dagger$ and $W_{i,j}^\ast \ket{\psi_2}\bra{\psi_2} (W_{i,j}^\ast)^\dagger$ are pure states. Therefore, the above expression represents an equal mixture of pure product states with mixing probability $\frac{1}{d^2}$ and thus is a separable mixed state. By definition, we also have $\mathcal{P}(\rho_s) \in \M_d$.     
\end{proof}
The second corollary states that optimal and Bell-diagonal entanglement witnesses are also optimal for $\M_d$. An entanglement witness $K$ is a Hermitian operator, for which the expectation $\Tr(K\rho_s)$ is non-negative for all separable states $\rho_s$, while at least one state $\rho_e$ exists with $\Tr(K \rho_e)<0$. In this case $K$ is said to detect the entangled state $\rho_e$. $K$ is said to be optimal, if there exists a separable state $\rho_0$, such that $\Tr(K\rho_0)=0$. It has been shown~\cite{baumgartnerHiesmayr} that Bell-diagonal entanglement witnesses can detect all entangled states in $\M_d$.
\begin{mycor}
    \label{cor2}
    If a Bell-diagonal entanglement witness $K$ is optimal, then it is also optimal for $\M_d$, i.e. $\exists \tilde{\rho_0} \in \M_d \cap SEP$ s.th. $\Tr(K \tilde{\rho_0}) = 0$.
\end{mycor}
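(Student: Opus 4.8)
The plan is to take the separable state $\rho_0$ that witnesses the optimality of $K$ (the state with $\Tr(K\rho_0)=0$, which exists by assumption) and push it into the magic simplex using the Pauli channel of Theorem \ref{theorem1}. Concretely, I would set $\tilde{\rho_0} := \mathcal{P}(\rho_0)$ and then verify two things: that $\tilde{\rho_0}$ is a separable Bell-diagonal state, and that the witness value $\Tr(K\tilde{\rho_0})$ is unchanged, i.e.\ still equal to zero.

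The first point is immediate from Corollary \ref{cor1}: since $\rho_0 \in SEP$, we have $\mathcal{P}(\rho_0) \in \M_d \cap SEP$, so $\tilde{\rho_0}$ indeed lies in the separable part of the simplex. For the second point I would exploit that $K$ is Bell-diagonal and that $\mathcal{P}$ is self-adjoint with respect to the Hilbert--Schmidt inner product. Writing $K = \sumkl \kappa_{k,l}\, \Pkl$ with real coefficients $\kappa_{k,l}$, one checks directly from the Kraus form $\mathcal{P}(\rho)=\sumkl \Pkl\, \rho\, \Pkl$ that $\mathcal{P}$ equals its own adjoint (the Bell projectors are Hermitian) and that $K$ is a fixed point, $\mathcal{P}(K)=K$. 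Hence
\begin{equation*}
\Tr\big(K\,\mathcal{P}(\rho_0)\big) = \Tr\big(\mathcal{P}(K)\,\rho_0\big) = \Tr(K\,\rho_0) = 0,
\end{equation*}
which gives $\Tr(K\tilde{\rho_0})=0$ as required. Equivalently, one can bypass the adjoint language and compute both sides directly: since $\mathcal{P}(\rho_0)=\sumkl \bra{\Okl}\rho_0\ket{\Okl}\,\Pkl$ and the projectors are mutually orthogonal, both $\Tr(K\rho_0)$ and $\Tr(K\tilde{\rho_0})$ collapse to the same diagonal sum $\sumkl \kappa_{k,l}\bra{\Okl}\rho_0\ket{\Okl}$.

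There is no genuine analytic obstacle here; the corollary is a structural consequence of Theorem \ref{theorem1} together with the defining property of $\mathcal{P}$. The only thing to get right is the observation that projecting onto the Bell-diagonal part via $\mathcal{P}$ preserves the expectation value of \emph{any} Bell-diagonal observable --- intuitively, the off-diagonal coherences that $\mathcal{P}$ destroys never contributed to $\Tr(K\rho_0)$ in the first place, because $K$ carries no off-diagonal components to pair with them. If one additionally wishes to confirm that $K$ stays a nontrivial witness on $\M_d$ (and is not merely tangent to the separable set there), this follows from the stated fact that Bell-diagonal witnesses detect all entangled states of $\M_d$, so any entangled member of the simplex supplies the required negative expectation value.
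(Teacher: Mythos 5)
Your proposal is correct and takes essentially the same route as the paper: you define $\tilde{\rho_0} := \mathcal{P}(\rho_0)$, invoke Corollary \ref{cor1} for membership in $\M_d \cap SEP$, and show the witness expectation is unchanged because $K$ is Bell-diagonal. Your self-adjointness/fixed-point argument ($\mathcal{P}(K)=K$) is just a repackaging of the paper's direct computation, which you also supply: both reduce $\Tr(K\rho_0)$ and $\Tr(K\tilde{\rho_0})$ to the same diagonal sum $\sumkl \kappa_{k,l}\bra{\Okl}\rho_0\ket{\Okl}$.
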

\begin{proof}
    Let $\rho_0 = \sum_{k,l,m,n=0}^{d-1}\rho_{0(k,l,m,n)} \ket{\Omega_{k,l}}\bra{\Omega_{m,n}} \in SEP$ be a state so that $\Tr(K \rho_0) = 0$. Define $\tilde{\rho_0} := \mathcal{P}(\rho_0)$. By Corollary \ref{cor1}, $\tilde{\rho_0} \in \M_d \cap SEP$. $K$ is of Bell-diagonal form and Hermitian, so  $K=\sumkl \kappa_{k,l} \Pkl,~~ \kappa_{k,l} \in \mathbb{R}$. This implies:
    $$
        0 = \Tr(K \rho_0) = \sum_{k,l=0}^{d-1} \kappa_{k,l} \rho_{0(k,l,k,l)} = \Tr(K \tilde{\rho_0})
    $$
\end{proof}
Using the corollaries, a parameterization of $SEP$ together with convex optimization methods can then be applied to $\M_d \cap SEP$, by either optimizing for the convex set of $SEP$ and mapping the result to $\M_d$  (eq.(\ref{pauliCh})), or by directly considering the action of $\mathcal{P}$ for the parameterization.

\subsubsection{Error correction for a maximally entangled qudit}
The Weyl-Twirl operators $T_{i,j}$ (eq.\ref{Tops}) allow for a simple error identification and correction scheme for the process of sharing a maximally entangled state without access to the state itself. 
Assume that in some information processing task, the initial state $\ket{\Omega_{0,0}}$ is transformed to $\ket{\Okl}$ with probability $p_{k,l}$ so that the error channel state $\mathcal{E}(P_{0,0})$ is represented as
\begin{gather}
    \label{errorChannel}
    \mathcal{E}(P_{0,0}) = \sumkl p_{k,l}\Pkl\; .
\end{gather}
Another possibility that leads to the form of eq.\eqref{errorChannel} is a lost or unknown measurement outcome in the Bell basis.
The task is now to identify in which Bell states $\ket{\Okl}$ the system is without measuring or disturbing it and optionally to transform it back to the initial state $\ket{\Omega_{0,0}}$. The idea is to use eq.\eqref{EigvalsT} to store the state dependent phase $\Phi = jk-il$ in an ancilla qudit. After the phase has been identified, the state is known and can be transformed back to the initial state. \\
Consider a system to be in the initial state $\ket{\psi_0}$, where the Bell state $\ket{\Okl}$ is unknown and the ancilla qudit in a fixed state, i.e.
\begin{equation*}
    \begin{aligned}
        \ket{\psi_0} = \ket{0} \otimes \ket{\Okl}\;.
    \end{aligned}
\end{equation*}
Suppose further that the following operations are available:
\begin{itemize}
    \item Generalized Hadamard/Fourier gate $F$ acting as $F \ket{j} = \frac{1}{\sqrt{d}} \sum_{k=0}^{d-1} w^{kj} \ket{k}$    
    \item Controlled Weyl-Twirl gate $CT_{i,j}$ acting as $CT_{i,j}\; \ket{m} \otimes \ket{n} = \ket{m} \otimes T_{i,j}^m\; \ket{n}$
\end{itemize}
Consider the application of the Fourier gate to the ancilla, followed by the controlled Weyl-Twirl gate controlled by the ancilla with the unknown Bell state as target, followed by the adjoint Fourier gate on the ancilla qudit. These operations act as follows:
\begin{equation*}
    \begin{aligned}
       \ket{\psi_0} &\rightarrow (F^\dagger \otimes \mathbb{1})CT_{i,j}(F \otimes \mathbb{1}) (\ket{0} \otimes \ket{\Okl}) \\
       &= (F^\dagger \otimes \mathbb{1})CT_{i,j} (\frac{1}{\sqrt{d}} \sum_k \ket{k} \otimes \ket{\Okl}) \\
       &= (F^\dagger \otimes \mathbb{1}) (\frac{1}{\sqrt{d}} \sum_k \ket{k} \otimes T_{i,j}^k\ket{\Okl}) \\
       &= (F^\dagger \otimes \mathbb{1}) (\frac{1}{\sqrt{d}} \sum_k \ket{k} \otimes w^{k\Phi} \ket{\Okl}) \\
       &= (F^\dagger \frac{1}{\sqrt{d}} \sum_k w^{k\Phi} \ket{k}) \otimes \ket{\Okl} \\
       &= F^\dagger F\ket{\Phi} \otimes \ket{\Okl} = \ket{\Phi} \otimes \ket{\Okl}
    \end{aligned}
\end{equation*}
Measurement of the ancilla qudit yields the phase $\ket{\Phi} = \ket{jk-il}$, depending on the applied Weyl-Twirl operator through $i$ and $j$ and the unknown Bell state indexed by $k$ and $l$. Due to the stabilizing property of the operators $T_{i,j}$, the Bell state is not disturbed and the operation can be repeatedly applied or run in parallel with additional ancilla qudits. The applied Weyl-Twirl operators can be chosen according to the expected error. If only phase or only shift errors appear, a single measurement with suitably chosen $(i,j)$ can identify the error. If both shift and phase errors need to be identified, then two measurements of $\Phi$ obtained with different $(i,j)$ for  $T_{i,j}$ are required to identify the phase/shift error via $k$/$l$. The correction operation to recover $\ket{\Omega_{0,0}}$ is then $W_{k,l}^\dagger \otimes \mathbb{1}: \ket{\Okl} \rightarrow \ket{\Omega_{0,0}}$.

\section{Generalized Bell-diagonal systems}

Now we investigate cases, in which we still have a complete orthonormal Bell basis, but not via the standard construction introduced in the last section. In particular, this leads in general to a breaking of the channel equivalence $\mathcal{P}\equiv \mathcal{T}$.

\subsection{Generalized bases of Bell states}
Bell states are characterized by the fact that the reduced state for any of its subsystems is maximally mixed. Those states are called ``locally maximally mixed'' and they imply that all information about the state is in the correlations between the subsystems. The standard Bell basis defined in \eqref{bellStates} is of course an example of those states. However, there are bases that share this property, but are not unitarily equivalent to the standard basis\cite{baumgartnerHiesmayr}. \\
For $k,l=0,...,(d-1)$ consider the $d^2$ states 
\begin{equation}
    \label{altBellStatesPre}
    \begin{aligned}
      \ket{\phi^\alpha_{k,l}} := \frac{1}{\sqrt{d}} \sum_{s=0}^{d-1} w^{k(s-l)} \alpha_s\; \ket{s-l} \otimes \ket{s}
    \end{aligned}
\end{equation}
with $\alpha_s= e^{i\phi_s}$ being a phase factor. For a suitable choice of phase vectors $\alpha{\color{red}}=\{\alpha_s\}_{s=0}^{d-1}$, these states form an orthonormal set of locally maximally mixed basis states. Consider the corresponding projection operators
\begin{equation*}
    \begin{aligned}
      \ket{\phi^\alpha_{k,l}}\bra{\phi^\alpha_{k,l}} = \frac{1}{d} \sum_{s,t=0}^{d-1} w^{k(s-t)} \alpha_s \alpha_t^\ast\; \ket{s-l}\bra{t-l} \otimes \ket{s}\bra{t},
    \end{aligned}
\end{equation*}
for which any partial trace of the first or second subsystem yields the maximally mixed state:
\begin{equation}
    \begin{aligned}
     \tr_{1/2}(\ket{\phi^\alpha_{k,l}}\bra{\phi^\alpha_{k,l}}) = \frac{1}{d} \sum_{s=0}^{d-1} \alpha_s \alpha_s^\ast \ket{s}\bra{s} = \frac{1}{d} \sum_{s=0}^{d-1} \ket{s}\bra{s}
    \end{aligned}
\end{equation}
Furthermore, the orthonormality condition for $\ket{\phi^\beta_{m,n}}$ and  $\ket{\phi^\alpha_{k,l}}$ reads:
\begin{equation}
    \begin{aligned} 
    \delta_{m,k}\delta_{n,l} =
     \bra{\phi^\beta_{m,n}}\ket{\phi^\alpha_{k,l}} = 
     \frac{1}{d} \sum_{s,t}^{d-1} \alpha_s \beta_t^\ast\; w^{k(s-l)-m(t-n)} \;\bra{t-n} \ket{s-l} \bra{t} \ket{s} =
     \delta_{n,l} \frac{1}{d} \sum_{s} w^{s(k-m)}w^{mn-kl} \alpha_s \beta_s^\ast
    \end{aligned}
\end{equation}
This shows that for different shift indices ($n,l$), the Bell states \eqref{altBellStatesPre} are orthogonal, independent of the phases ($\alpha, \beta$). For equal shift indices, we now require $\alpha_s = \beta_s$, which implies:
\begin{equation}
    \begin{aligned}
     \bra{\phi^\beta_{m,l}}\ket{\phi^\alpha_{k,l}} = \frac{1}{d} w^{l(m-k)} \sum_{s} w^{s(k-m)} \alpha_s \alpha_s^\ast = \delta_{m,k}
    \end{aligned}
\end{equation}
Thus, in order to meet the requirements for orthonormality and being locally maximally mixed, we define the matrix $(\alpha_{s,t})_{s,t=0,...,d-1},~~|\alpha_{s,t}|=1~ \forall s,t$
and use it to define the ``generalized'' Bell basis $\lbrace \ket{\Phi_{k,l}^\alpha}|k,l = 0,...,d-1 \rbrace$ with
\begin{equation}
    \label{altBellStates}
    \begin{aligned}
      \ket{\Phi^\alpha_{k,l}} := \frac{1}{\sqrt{d}} \sum_s^{d-1} w^{k(s-l)} \alpha_{s,l}\; \ket{s-l} \otimes \ket{s}
    \end{aligned}
\end{equation}
In accordance with eq.\eqref{weylOps} and eq.\eqref{bellStates}, we define unitary operators that map the maximally entangled state $\ket{\Omega_{0,0}}$ to the generalized Bell basis states:
\begin{equation}
    \label{altWeylOp}
    \begin{aligned}
      &V^\alpha_{k,l} := \sum_j w^{j k} \alpha_{j+l,l}\; \ket{j}\bra{j+l},~~k,l=0,...,d-1 \\
      &\ket{\Phi^\alpha_{k,l}} = V^\alpha_{k,l} \otimes \mathbb{1} \;\ket{\Omega_{0,0}} \\
      & P_{k,l}^\alpha := \ket{\Phi^\alpha_{k,l}} \bra{\Phi^\alpha_{k,l}}\;.
    \end{aligned}
\end{equation}
Note that standard Weyl-Heisenberg operators $W_{k,l}$ and Bell basis states $\ket{\Okl}$ are a special case of eq.\eqref{altBellStates}, namely for $\alpha_{s,t} = 1 ~ \forall s,t$. In this sense we can talk about a generalization of the standard Weyl-Heisenberg operators and related Bell states.

\subsection{Properties of the generalized Bell bases}

The construction of generalized Bell basis introduced in the previous section (eq.\eqref{altWeylOp}) looks rather similar to the standard case (eqs.\eqref{weylOps}, \eqref{bellStates}) and generates also a set of orthonormal, maximally entangled and locally maximally mixed states. However, these Bell bases show significant differences with implications for the entanglement properties of diagonal states. We first state those differences and then numerically demonstrate their effect on the entanglement structure of corresponding systems of Bell-diagonal states. \\
The following differences to the standard construction hold for general transformation matrices $\alpha$ and can be checked by simple calculation or counter examples:
\begin{enumerate}
    \item The linear group structure does generally not exist: $V^\alpha_{k_1,l_1}V^\alpha_{k_2,l_2} \not\propto V^\alpha_{k_1+k_2, l_1+l_2}$
    \item The stabilizing property does generally not hold: $V^\alpha_{i,j} \otimes V_{i,j}^{\alpha\ast} \ket{\Phi_{k,l}} \neq w^\phi \ket{\Phi_{k,l}}$
    \item Let the simplex~\cite{baumgartnerHiesmayr} 
        \begin{gather}
            \label{altSimplex}
            \M^\alpha_d :=  \lbrace \rho = \sumkl c_{k,l}\; P^\alpha_{k,l}~ |~ \sumkl c_{k,l} = 1, c_{k,l} \geq 0  \rbrace
        \end{gather} be the set of states that are diagonal in the generalized Bell basis. The generalized Pauli channel 
        $$\mathcal{P}^\alpha: \mathcal{H} \rightarrow \M^\alpha_d,~~ \mathcal{P}^\alpha(\rho) := \sumkl P_{k,l}^\alpha \;\rho\; P_{k,l}^\alpha$$ 
        and the generalized Weyl-Twirl channel $$\mathcal{T}^\alpha(\rho) := \frac{1}{d^2}\sum_{i,j=0}^{d-1}  V^\alpha_{i,j}\otimes V^{\alpha\ast}_{i,j} \;\rho\; (V^\alpha_{i,j}\otimes V^{\alpha\ast}_{i,j})^\dagger$$ are generally not identical: $\mathcal{P}^\alpha \not\equiv \mathcal{T}^\alpha$
    \item  The generalized Pauli channel does generally not conserve separability, so Corollary \ref{cor1} does not hold, i.e.: 
    $\rho_s \in SEP \not\Rightarrow \mathcal{P}^\alpha(\rho_s) \in \M^\alpha_d \cap SEP $
    \item Due to the loss of the linear structure, also entanglement conserving symmetries of $\M_d$ (see Ref.~\cite{PoppQutritsAndQuquarts} and the references therein) are lost.
\end{enumerate}

\subsection{Explicit counter examples for bipartite qutrits}

\begin{figure}[h!]
    \centering
    \includegraphics[width=1.0\textwidth,keepaspectratio=true]{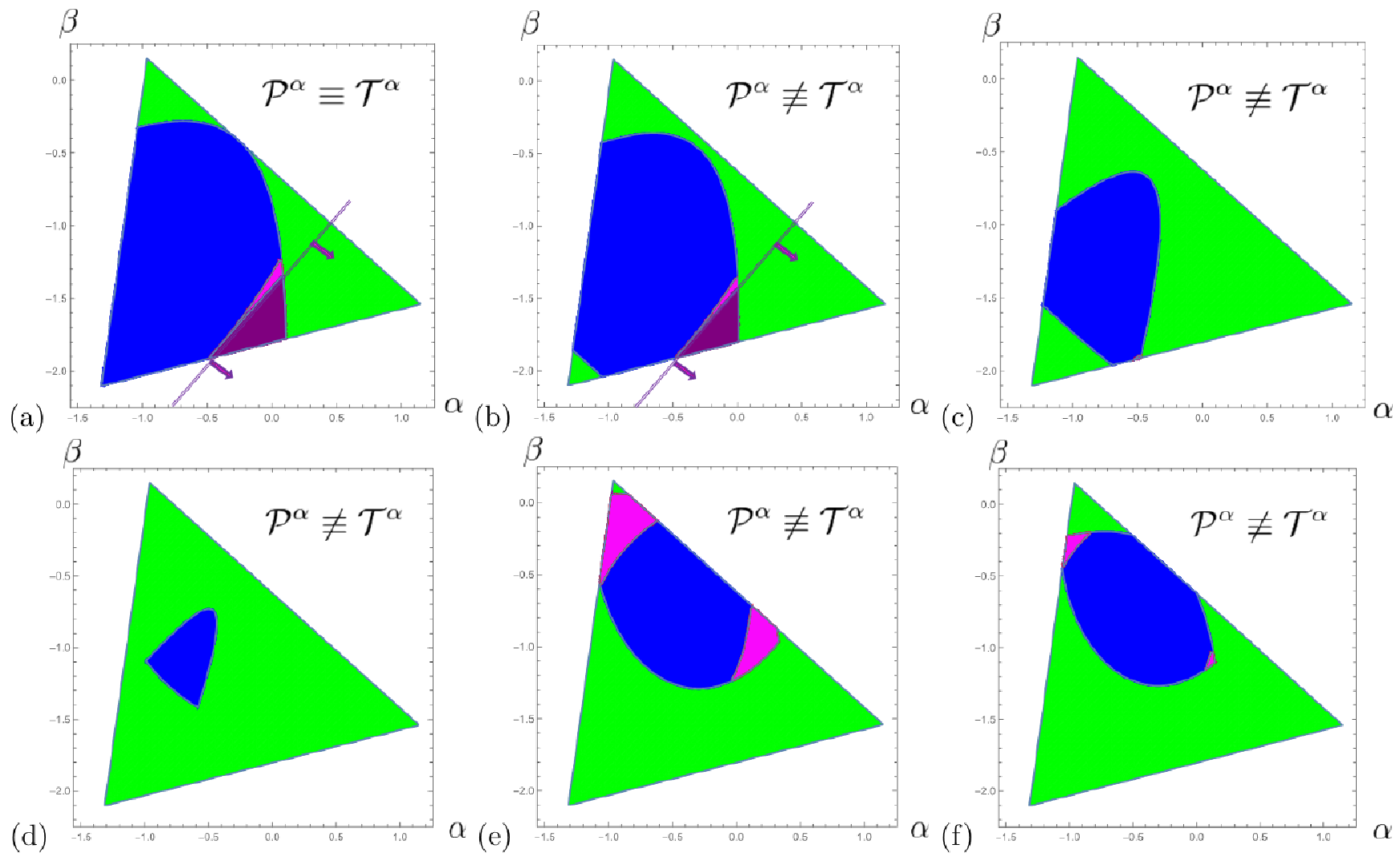}
    \caption{These pictures show a $2$-dimensional slice through the magic simplex for the state family, $\rho(\alpha,\beta)=(1-\frac{\alpha}{5}-\frac{\beta}{4}-\frac{1}{\sqrt{3}})\frac{1}{9}\mathbb{1}_9+\frac{\alpha}{5} P_{0,0}+\frac{\beta}{8} (P_{0,1}+P_{0,2})+\frac{1}{3 \sqrt{3}} (P_{1,0}+P_{1,1}+P_{1,2})$ which was investigated for the experimental proof of the existence of bound entanglement~\cite{hiesmayrLoeffler}. Each colored point represents one state parameterized by $(\alpha, \beta)$. If the color is green, then the corresponding state is NPT entangled, else PPT. The pink and purple regions show states where the realignment criterion (E2) and the witnesses used in the experiment~\cite{hiesmayrLoeffler} detect PPT-entangled states, respectively. Figure~(a) shows the result for states constructed with the standard basis choice, all others for non-standard bases. Figure~(b) shows the result for arbitrary but small phases. All other figures show the result for random phases in $[0,2\pi]$. We observe that the PPT-region (blue and pink/purple) changes dramatically and with that the possibility to find PPT-entangled states (pink/purple).}
    \label{fig1}
\end{figure}

In Fig.~\ref{fig1} we present a visualization of the drastic change in the separability structure of a state family within $\mathcal{M}_d^\alpha$ for dimension $d=3$ for non-standard Bell basis choices compared to the standard Bell basis choice.  The state family consists of mixtures of Bell projectors, which were experimentally investigated in Ref.~\cite{hiesmayrLoeffler}. This family is constructed to show the greatest violation for a specific entanglement witness based on mutually unbiased bases~\cite{ineqMUBs}, which can in principle be realized in experiments. Fig.~\ref{fig1}~(a) shows the result of the standard basis choice, including the region detected by the entanglement witness. In Fig.~\ref{fig1}~(b) the phases defining a generalized Bell basis (eq.(\ref{altWeylOp})) were randomly chosen but close to zero. One observes that the PPT-region shrinks as well as the region detected by the realignment criterion E2 (defined in the Appendix, named E2 according to Ref.~\cite{PoppACS}) and the entanglement witness. The other figures~(c)-(f) are based on random choices of the phases for defining the generalized Bell projectors and show how drastically the PPT region and the region detected by the realignment criterion (E2) changes. The Weyl-Heisenberg structure is thus very important for the separability structure within $\mathcal{M}_3^\alpha$.

\subsection{General entanglement structure changes for bipartite qutrits}

In the last section we have presented particular examples, but we can also come up with general results for $d=3$ by exploiting and adapting the tools developed in Refs.~\cite{PoppACS, PoppQutritsAndQuquarts}.
A combination of analytical and numerical methods~\cite{PoppJoss2023} exploiting the Heisenberg-Weyl structure were used to analyze the entanglement structure of Bell-diagonal qudits for $d=3$ and $d=4$, leading to an efficient solution of the separability problem in this particular case. As a consequence of the stated differences above, most of the tools cannot be applied for mixtures of the generalized Bell basis states \eqref{altBellStates}. However, some methods are still applicable that allow us to compare the standard basis with the generalized Bell bases for dimension $d=3$.\\
In order to compare the system $\M_3$ to $\M^\alpha_3$, we use the same uniformly distributed sample set of mixing probabilities $\ckl$ to construct $10,000$ states in $\M_3$ and in $\M^\alpha_3$ for various $\alpha$. In particular, we generate $1,000$ matrices $\alpha$ with uniformly distributed elements. For each $\alpha$, we define the Bell basis projectors $\Pkl^\alpha$ and construct $10,000$ diagonal states according to \eqref{altSimplex}. We then analyze these states for PPT and entanglement detection by the realignment criterion~\cite{realignment} (named E2 in Ref.~\cite{PoppACS}) and the quasipure concurrence criterion~\cite{BaeQuasipure} (named E3 in Ref.~\cite{PoppACS}) (see Appendix for definitions), which do not depend on the entanglement class preserving symmetries of the standard simplex $\M_3$, and are the most successful analytical criteria among those investigated in Ref.~\cite{PoppACS, PoppQutritsAndQuquarts}.\\
\begin{table}
\parbox{.50\linewidth}{
    \begin{center}
    \begin{tabular}{ l|c|c|c|c } 
    & Std. Basis & Min & Max & Mean \\
     \hline
        rPPT &  $60.0\%$ & $49.0\%$ & $59.0\%$ & $51.6\%$ \\ 
         \hline    
        E2/PPT & $10.4\%$ & $2.1\%$ & $9.6\%$ & $3.6\%$ \\ 
         \hline
        E3/PPT & $2.7\%$ & $0.6\%$ & $2.3\%$ & $1.0\%$ \\ 
        \hline
        (E2\&E3)/PPT & $1.8\%$ & $0.1\%$ & $1.4\%$ & $0.4\%$ \\
    \end{tabular}
    \caption{Minimum, maximum and mean statistics for 1,000 sample systems $\M_3^\alpha$ and the reference values for the standard system $\M_3$. Relative volume of PPT states (rPPT) and the share of PPT-entangled states among them for the realignment criterion (E2/PPT), the quasipure approximation criterion (E3/PPT) and the combined criterion ((E2\&E3)/PPT).}
    \label{detectionVolumes}
    \end{center}
}
\hfill
\parbox{.45\linewidth}{
    \begin{center}    
    \begin{tabular}{ l|c } 
        & rPPT   \\
         \hline
        E2/PPT &  $0.99$   \\ 
         \hline    
        E3/PPT & $0.95$ \\ 
        \hline
        (E2\&E3)/PPT & $0.96$ \\
    \end{tabular}
    \caption{Correlation coefficient for the relative volume of PPT states (rPPT) and the share of PPT-entangled states among the PPT states as detected by the realignment criterion (E2/PPT), the quasipure concurrence criterion (E3/PPT) and the combined criterion ((E2\&E3)/PPT).}
    \label{correlations}
    \end{center}
}
\end{table}   
Table \ref{detectionVolumes} shows the share of PPT states among all analyzed states (rPPT), the share of PPT-entangled states detected by the realignment criterion among all PPT states (E2/PPT), and the corresponding share for the quasipure concurrence criterion (E3/PPT), as well as the share of entangled states that were detected by both criteria simultaneously (E2\&E3)/PPT. It shows that for the 1,000 realizations of $\M_3^\alpha$ the relative volume of PPT states is between $49\%$ and $59\%$. The mean of $51.6\%$ indicates most of the systems are closer to the minimum value than to the maximum. Indeed, each of more than $85\%$ of the systems have less than $54\%$ PPT states. Curiously, the PPT share for the standard system $\M_3$ is higher than for any analyzed $\M_3^\alpha$. For values of all $\alpha$ being sufficiently close to $1$, i.e. for Bell basis being close to the standard Bell basis, the observed quantities are arbitrarily close, but never higher then the reference values for the standard system. The same statements hold for the detection capabilities of E2, E3 and (E2\&E3), which are always lower than the value for the standard system and are typically closer to the minimal observed value.\\
Table \ref{correlations} presents strong positive correlations between the share of PPT states in the systems $\M_3^\alpha$ and the relative detection capabilities of the criteria E2, E3 and simultaneous detection (E2\&E3). Interestingly, the more PPT states are present, the larger the relative volume of PPT-entangled states is, which are detected by those criteria. Moreover, the relative number of simultaneously detected states is then also high, in general. The strong correlations imply an almost linear dependence between those quantities. \\
Based on these observations, one can assume that the more states with positive partial transposition are present in a Bell-diagonal system, the larger the share of PPT-entangled states, detected by the two effective entanglement criteria E2 and E3, is. Additionally, a higher share of PPT states seems to imply a higher share of PPT-entangled states that can be detected by both of the criteria simultaneously. While in the standard system only $33\%$ of the states detected by E3 are not also detected by E2, there are Bell-diagonal systems with less PPT states, in which $83\%$ of the entangled states detected by E3 are not detected by E2. Curiously, the standard system shows extreme values for the relative volume of PPT states and the number of entangled states among them.

\section{Summary and outlook}
In this work, we investigated and compared properties of bipartite maximally entangled and locally maximally mixed Bell states, known as Bell-diagonal states and showed that the special features of the Weyl-Heisenberg Bell basis imply special features in the entanglement structure of Bell-diagonal states.

The frequently used ``standard'' construction of a $d^2$ dimensional Bell basis of the joint Hilbert space $\mathcal{H}_d \otimes \mathcal{H}_d$ via the Weyl-Heisenberg operators is presented and properties of this special basis were derived that are strongly related to the separability problem and other applications, for example, error corrections or channel equivalences. In particular, utilizing the Weyl relations of the Weyl-Heisenberg operators $W_{k,l}$, we showed that the ``Weyl-Twirl'' operators $W_{i,j} \otimes  W_{i,j}^\ast$ are diagonalized by all elements of the standard Bell basis. We then leveraged this stabilizing property to show the equivalence of the ``Pauli'' channel, $\mathcal{P}$, which projects onto Bell-diagonal states in the standard Bell basis, and the ``Weyl-Twirl'' channel, $\mathcal{T}$, which represents the randomized application of the Weyl-Twirl operators. 

One implication of this channel equivalence, $\mathcal{P}\equiv\mathcal{T}$, is that the separability is conserved under the channels and that optimal Bell-diagonal entanglement witnesses remain optimal for the set of Bell-diagonal states. We then demonstrated several applications of the Weyl-Twirl operators. On the one hand, they allow the efficient parameterization of separable Bell-diagonal states, relevant for polytope approximations of this convex set and the construction of optimal Bell-diagonal entanglement witnesses. These applications, enabled by the special structure of investigated Bell states, allow the effective entanglement classification and detection of PPT-entanglement for Bell-diagonal states.
On the other hand, a simple error detection and correction scheme for a maximally entangled qudit was presented.

Highlighting the implication of the channel equivalence, $\mathcal{T}\equiv\mathcal{P}$, we investigated systems of Bell-diagonal states that are constructed from generalized Bell basis states. Those generalized Bell states form an orthonormal basis and are locally maximally mixed, similar to the standard basis based on Heisenberg-Weyl operators. However, the channel equivalence $\mathcal{T}\equiv\mathcal{P}$ and other properties are lost, which has strong implications on the entanglement structure within the family of Bell-diagonal states. We summarize the most relevant objects, their properties and relations in Fig.\ref{fig:2} below. Thereby, we compare the results related to the standard Bell basis to the generalized Bell bases.

In detail, we analyzed this for bipartite qutrits. We derived the relative volume of PPT states among Bell-diagonal states. For the standard basis, the relative volume is $60\%$ while it was shown to drop to $49\%$ for some generalized Bell bases. Interestingly, we found that the share of Bell-diagonal PPT states is generally lower for the generalized Bell bases compared to the standard Bell basis. Moreover, the relative detection rate for PPT-entangled states of two, in the standard system highly effective, entanglement criteria strongly correlates to the share of PPT states. The more states with positive partial transposition exist for a system of Bell-diagonal states, the higher is the share of detected PPT-entangled states among them for those two criteria. Consequently, for the system based on the standard Bell basis, which has the highest PPT share, also the highest relative amount of PPT-entanglement is observed. Furthermore, we visualized the dramatic change of the volume of PPT states in Fig.~\ref{fig1} for a family of states investigated previously in experiment. 

These results indicate that, among all in this way generalized Bell bases, Bell-diagonal states related to the standard construction have some very special properties. These properties have relevant implications on the entanglement structure and on practical applications using those Bell states. 
Numerical results suggest that either the two entanglement criteria are less effective for Bell-diagonal states if the special properties of the standard system are not given or that the structure and amount of PPT entanglement depend on those properties. 

In summary, our findings are the starting point to analyze different quantum information theoretic protocols with regards to their dependence on the underlying structure and properties of used Bell states.

\begin{figure}[H]
\label{overviewdiagram}
\centering

\tikzstyle{block} = [draw,rectangle,thick,minimum height=2em,minimum width=2em, text width=3.2cm, text centered] 
\tikzstyle{line} = [draw, very thick, color=black!50, -latex']
\begin{tikzpicture}[node distance = 1cm, auto] 
    \node [block] (weylops) {Weyl operators: \\$W_{k,l}$ }; 
    \node [block, left of=weylops, xshift=-3cm, yshift=-1cm] (weylrel) {Weyl relations: \\Yes};
    \node [block, below of=weylops, yshift=-2cm] (weylbasis) {Bell basis: \\ $\ket{\Okl}$};
    \node [block, text width=4cm, left of=weylbasis, xshift=-3cm] (twirl) {Weyl-Twirl: \\$T_{i,j} = W_{i,j}\otimes W_{i,j}^\star$ \\
    $T_{i,j} \ket{\Okl} = w^{jk-il} \ket{\Okl}$
    }; 
    \node [block, below of=weylbasis, yshift=-1cm] (bds) {Diagonal states: \\$\M_d$}; 
    \node [block, left of=bds, xshift=-3cm] (channels) {Channels:\\ $\mathcal{P} \equiv \mathcal{T}$}; 
    \node [block, below of=bds, yshift=-1cm] (ppt) {Relative amount PPT in $\M_3$:\\ $60\%$}; 
    \node [block, below of=ppt, yshift=-1cm] (bound) {Detected PPT-entanglement: \\$11.3\%$}; 

    \path [line] (weylops) -- (weylbasis) ; 
    \path [line] (weylops) -- (weylrel) ;
    \path [line] (weylops) -- (twirl) ; 
    \path [line] (weylbasis) -- (bds) ; 
    \path [line] (twirl) -- (channels) ; 
    \path [line] (bds) -- (ppt) ; 
    \path [line] (ppt) -- (bound) ; 

    \node [block, right of=weylops, xshift=3.2cm] (genweylops) {gen. Weyl operators: \\$V^\alpha_{k,l}$ }; 
    \node [block, right of=genweylops, xshift=3cm, yshift=-1cm] (genweylrel) {Weyl relations:\\No};
    \node [block, below of=genweylops, yshift=-2cm] (genweylbasis) {gen. Bell basis: \\$\ket{\Phi^\alpha_{k,l}}$};
    \node [block, below of=genweylbasis, yshift=-1cm] (genbds) {Diagonal states:\\ $\M^\alpha_d$}; 
    \node [block, text width=4cm, right of=genweylbasis, xshift=3cm] (gentwirl) {gen. Weyl-Twirl:\\ $T^\alpha_{i,j} = V^\alpha_{i,j}\otimes V^{\alpha\star}_{i,j}$\\
    $T^\alpha_{i,j} \ket{\Phi^\alpha_{k,l}} \neq w^\phi \ket{\Phi^\alpha_{k,l}}$}; 
    \node [block, right of=genbds, xshift=3cm] (genchannels) {Channels:\\ $\mathcal{P^\alpha} \not\equiv \mathcal{T^\alpha}$}; 
    \node [block, below of=genbds, yshift=-1cm] (genppt) {Relative amount PPT in $\M_3^\alpha$:\\ $\leq 60\%$}; 
    \node [block, below of=genppt, yshift=-1cm] (genbound) {Detected PPT-entanglement:\\ $\leq 11.3\%$}; 

    \path [line] (genweylops) -- (genweylbasis) ; 
    \path [line] (genweylops) -- (genweylrel) ;
    \path [line] (genweylops) -- (gentwirl) ; 
    \path [line] (genweylbasis) -- (genbds) ; 
    \path [line] (gentwirl) -- (genchannels) ; 
    \path [line] (genbds) -- (genppt) ; 
    \path [line] (genppt) -- (genbound) ; 
\end{tikzpicture}
\caption{Comparison of relevant objects and their relation for the standard and generalized Weyl-construction. Differences include the lack of linear structure (Weyl relations) of generalized Weyl operators, the lack of the stabilizing property of the Weyl-Twirl operators and the general inequivalence of Pauli channel and Weyl-Twirl channel. In the generalized case, the amount of PPT states and the relative amount of detected PPT-entanglement is lower compared to the standard system.} 
\label{fig:2}
\end{figure}
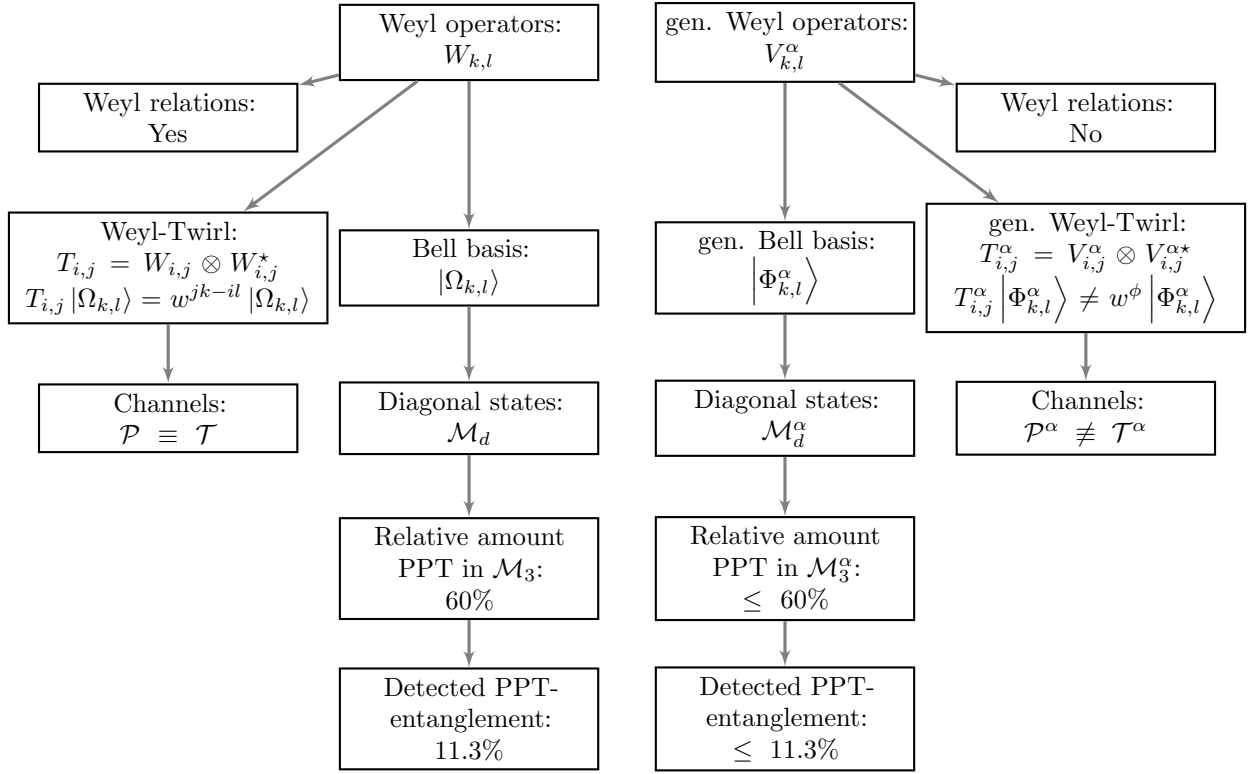

\section*{Data availability statement}
All analyzed datasets were generated during the current study and are available
from the corresponding author on reasonable request. \\

\section*{Code availability statement}
The software used to generate the reported results is published as repository and registered open source package ``BellDiagonalQudits.jl'' ~\cite{PoppJoss2023} available at \url{https://github.com/kungfugo/BellDiagonalQudits.jl}.

\printbibliography 

\section*{Acknowledgments}
B.C.H. and C.P. acknowledge gratefully that this research was funded in whole, or in part, by the  Austrian Science Fund (FWF) project P36102-N. For the purpose of open access, the author has applied a CC BY public copyright licence to any Author Accepted Manuscript version arising from this submission. 

\section*{Author contributions statement}
C.P. carried out the analytical and numerical analyses, developed the new methods and implemented the software.\\
B.C.H. revised the analyses and added results and proposed improvements. \\
C.P and B.C.H. both edited the manuscript.

\section*{Competing interests}
 All authors declare no financial or non-financial competing interests.
 
\section*{Additional information}
Correspondence and requests for materials should be addressed to C.P..

\newpage

\section*{Appendix}
\subsection*{Applied entanglement criteria}
\textbf {Peres-Horodecki/PPT criterion} \\
The partial transpose $\Gamma$
acts on the computational basis states of a bipartite state as
$(\ketbra{i}{j} \otimes \ketbra{k}{l})^{\Gamma} := \ketbra{i}{j} \otimes
\ketbra{l}{k}$. If the partial transpose for a density matrix of a bipartite state has at least one negative eigenvalue (in which case it is said to be ``NPT''), it is entangled ~\cite{peres}. For $d=2$ it
detects all entangled states, but for $d \geq 3$ it is only sufficient due to
the existence of PPT-entangled states. \\ \\
\textbf{E2: Realignment criterion} \\
The realignment operation $R$ is defined for the computational basis states as
$(\ketbra{i}{j} \otimes \ketbra{k}{l})_R := \ketbra{i}{k} \otimes
\ketbra{j}{l}$. If the
sum of singular values of the density matrix of a realigned state $\sigma_R$ are larger than $1$,
then $\sigma$ is entangled ~\cite{realignment}. \\ \\
\textbf{E3: Quasipure concurrence criterion} \\
The quasipure approximation~\cite{BaeQuasipure} $C_{qp}$ provides an efficiently computable lower bound on the concurrence~\cite{woottersConcurrence} $C$. Let $i$ and $j$ ($k$ and $l$) enumerate the computational basis vectors of the first (second) partition of the Hilbert space. Define
$A := 4 \sum_{i<j, k<l} \ket{ikjl}-\ket{jkil}-\ket{iljk}+\ket{jlik} \times h.c.$ and let a state be represented in its spectral composition $\rho = \sum_i \mu_i \ket{\Psi_i}\bra{\Psi_i}$. With a dominant eigenvector $\ket{\Psi_0}$, define $\ket{\xi} \propto A \ket{\Psi_0} \otimes \ket{\Psi_0}$ and 
$$T_{i,j} := \sqrt{\mu_i \mu_j} \bra{\Psi_i} \otimes \bra{\Psi_j} \ket{\xi}.$$ The quasipure concurrence is defined by the singular values $S_i$ of the matrix $T_{i,j}$ 
$$
C_{qp}(\rho) := \max(0, S_0 - \sum_{i>0}S_i) \leq C(\rho)
$$ and detects entanglement for positive values.\\
For the standard system $\M_d$, the approximation takes an explicit form: Here, the singular values $S_{k,l}$ are explicitly given by
\begin{gather*}
S_{k,l} = \sqrt{
  \frac{d}{2(d-1)} c_{k,l}
  [(1-\frac{2}{d}) c_{n,m} \delta_{k,n} \delta_{l,m}
  + \frac{1}{d^2} c_{(2n-k)mod~d,(2m-l)mod~d}]
}
\end{gather*}
where $(n,m)$ is a multi-index of the coordinate of the largest value $\lbrace c_{k,l} \rbrace $~\cite{BaeQuasipure}.

\end{document}